\newtheorem{theorem}{Theorem}
\newtheorem{corollary}{Corollary}[theorem]
\newtheorem{definition}{Definition}
\newtheorem*{remark}{Remark}
\newcounter{MYtempeqncnt}
\newcommand{\ou}{%
  \mathrel{%
    \vcenter{\offinterlineskip
      \ialign{##\cr$<$\cr\noalign{\kern-1.5pt}$>$\cr}
			 }%
  }%
}	
  \date{}
\begin{document}

 \title{A Statistical Threshold for Adversarial Classification in Laplace Mechanisms}
\author{%
Ay\c{s}e \"{U}nsal \qquad Melek \"{O}nen\\ EURECOM, France \\ \url{firstname.lastname@eurecom.fr}
}

%
%
%
\maketitle
 \pagestyle{plain}

  \begin{abstract}
This paper studies the statistical characterization of detecting an adversary who wants to harm some computation such as machine learning models or aggregation by altering the output of a differentially private mechanism in addition to discovering some information about the underlying dataset. An adversary who is able to modify the published information from a differentially private mechanism aims to maximize the possible damage to the system while remaining undetected. We present a trade-off between the privacy parameter of the system, the sensitivity and the attacker's advantage (the bias) through determining the threshold for the best critical region of the hypothesis testing problem for deciding whether or not the adversary's attack is detected. Such trade-offs are provided for Laplace mechanisms using one-sided and two-sided hypothesis tests. Corresponding error probabilities are analytically derived and ROC curves are presented for various levels of the sensitivity, the absolute mean of the attack and the privacy parameter. Subsequently, we provide an interval for the bias induced by the adversary so that the defender detects the attack. Finally, we adapt the \textit{Kullback-Leibler differential privacy} to adversarial classification.

	
	\end{abstract}
\section{Introduction}

The widespread use of Big Data technologies has opened the door for malicious attacks resulting in potentially devastating consequences in critical applications such as autonomous driving or healthcare. In particular, an adversary may look for means to modify models or their outputs and consequently wreak havoc on a system and its users. Furthermore, such techniques usually rely on large datasets to be efficient which increases the chance of fraudulent use of personal information. \textit{Adversarial classification} (also called anomaly detection) is a statistical tool enabling the detection of modification/misclassification attacks whereas privacy-preservation commonly makes use of so-called differential privacy (DP) mechanisms. A mechanism, such as a randomized function of a dataset, is said to be \textit{differentially private} if the level of privacy of individual participants and the output of the mechanism remain unaltered even when any one of the participants decides to submit or equivalently remove his/her input from the statistical dataset. This definition is also applicable to aggregate information of all participants. This paper studies the security of systems where DP mechanisms are also used by adversaries. 

Differential privacy (DP) \cite{r8} is defined as a stochastic measure of privacy -that has a precise mathematical formulation- to ensure privacy of individual users when handling large datasets. DP mechanisms have furthermore been used to develop practical methods for protecting private user-data at the moment they provide information to the system. In these cases, the use of DP measure aims to maintain the accuracy of the underlying operation without incurring a cost of the privacy of individual participants.  In some sense, DP is a notion of robustness against changes in the dataset. The degree of this change is measured/determined by an adjustable privacy parameter and the amount of the change, that any single argument to the system reflects on its output, is called the sensitivity. 

Statistical classification is now widely used as a supervised machine learning approach and consists in placing or \textit{classifying} an item into one of several categories based on a number of measurements of interest. In \cite{r7}, classification is described as a hypothesis testing problem for choosing between two possible values that the parameter(s) of a probability distribution can take on to place this item into the right category. Adversarial classification is an application of statistical classification where an adversary tries to fool a classifier which detects outliers in order to remain undetected.

In this paper, we consider a scenario where privacy enhancing technologies, which were originally designed to support privacy protection of legitimate individuals, are used by adversaries to harm the security of systems. We assume that the adversary is aware of the underlying DP mechanism and its parameters and wants to benefit from it using it as an attack tool to avoid being detected \cite{r1, r12}. The adversary's goal is to maximize the possible damage while minimizing the probability of being detected. We study the statistical framework of adversarial classification with DP. Our goal is to evaluate the impact of privacy parameters on the actual power of the adversarial classification. In particular, we focus on the aggregation operation whereby parties contribute with some individual numerical data and the system collecting this data computes the sum of them. We establish a stochastic relation between the probability of the adversary's success and the privacy parameter in the specific case of Laplace mechanisms.

\subsection{Related Work and Methodology}
This part is reserved for a discussion on related work and background of the addressed problem emphasizing the differences between the existing literature and the current paper along with the methodology that is used.

The addressed problem in this work differs from existing work on DP which considers an adversary model where the goal of the attacker is to solely discover some information about the dataset. For instance, the assumption in \cite{r2} is that the adversary has the knowledge of the entire dataset except for one entry. This translates to the implicit strong adversary assumption. In this paper, our aim is to extend this model with a stronger adversary who also wants to harm the dataset and the output of the mechanism. We consider an adversary who is able to modify (add, replace, delete, etc.) the published information from a differentially private mechanism which is a noisy version of the output. The adversary's goal in this model is to maximize the possible damage while remaining undetected. Thus, there are two sides of what the adversary wants to achieve: (i) s/he gives false data by modifying, replacing, adding or changing the released information in any other way, with the biggest possible difference from the real data, (ii) all this modification has to be achieved without being detected. On the defender's end, the mechanism wants to preserve DP and detect the attack.

A simpler version of the described problem is addressed by \cite{r1} from an adversarial perspective and the two conflicting goals of the adversary is formulated as an optimization problem where maximizing the bias induced by the adversary is the objective function. But the privacy parameter does not take part in the formulation of this optimization problem in \cite{r1}. We seek a characterization of the trade-off between the attack (the change in the output induced by the adversary) and the privacy parameter. On the other hand, in \cite{r4}, the authors show that the sensitivity of a mechanism has also an impact on the differentially private output. The noise to be added on the output is calibrated accordingly as a function of the noise distribution. Such a characterization of the problem described in this paper introduces a third element as the value of the attack to be included in this adjustment of the DP noise with respect to (w.r.t.) the sensitivity of the system. This would allow us to be able to determine a threshold for detecting the attacker, alternatively, for the attacker to remain undetected. 

As for the methodology, we will use the framework of statistical hypothesis testing in a similar vein to \cite{r3} where the authors determine an appropriate value of the privacy parameter as a function of false alarm and mis-detection probabilities in deciding on the presence or absence of a particular record in a dataset. Similarly, in \cite{r13}, the author studies the differentially private hypothesis testing in the local setting where users locally add the DP noise on their personal data before submitting it to the dataset. In this paper, we tailor this approach for the problem described above as a first attempt for a solution for anomaly detection in Laplace mechanisms under global DP where the personal sensitive data is transmitted to a central server by the users and the server applies DP noise on the data before its release to the public. We propose an interval for the privacy budget of Laplace mechanisms so that the defender detects the attack using hypothesis testing.

In summary, we aim to extend the adversary model to an adversary who could significantly modify the released information without being detected. The ultimate goal is to address the trade-off between privacy, sensitivity and attacker's advantage in a statistical framework.

\subsection{Contributions}
The following list is a brief summary of the contributions of this paper.
\begin{itemize}
\item We consider a new attacker model whereby the adversary takes advantage of the underlying differentially private mechanism in order not to be detected in his/her attack.
\item We derive a trade-off between the privacy protected adversary's advantage and the security of the system for the adversary to remain undetected while giving as much damage as possible to the system or, alternatively, for the defender to preserve the privacy of the system and detect the attacker. This trade-off is defined in the framework of statistical hypothesis testing similarly to \cite{r3}.
\item We establish statistical thresholds for detecting the attack as a function of the error probabilities for Laplace mechanisms through one-sided and two-sided hypothesis tests. Subsequently, these thresholds are used for deriving intervals for the attack and/or the privacy budget to remain undetected as a function of the error probabilities and the sensitivity. 
\item We adopt the Kullback-Leibler DP definition of \cite{r2} to the addressed problem for adversarial classification in Laplace mechanisms and present numerical comparisons of different cases where the sensitivity of the system is less than, equal to and greater than the bias induced by the adversary on the published information.
\end{itemize}

The outline of the paper is as follows. Section \ref{sec:pre_model} provides necessary preliminaries from DP literature along with the detailed definition of the addressed problem and performance criteria. Section \ref{sec:results} presents the main results on the statistical trade-off between the privacy budget, sensitivity and the impact of the attack along with the proofs. In Section \ref{sec:KL}, we derive the so-called Kullback-Leibler DP for adversarial classification followed by numerical evaluations of the obtained results in Section \ref{sec:numeric}. Lastly, in Section \ref{sec:conc}, we draw conclusions based on our findings. 

\section{Preliminaries and Model \label{sec:pre_model}}
In this part, we revisit certain notions from the existing literature on DP which will also be used in this paper. These preliminaries will be followed by a detailed definition of the addressed problem.
We begin with defining neighboring datasets and sensitivity of DP. 
\begin{definition}\label{eq:distance}
Any two datasets that differ only in one record are called neighbors \cite{r5}. For two neighboring datasets, the following equality holds
\begin{equation} \label{eq:hamming}
d(x, \tilde{x})=1
\end{equation} where $d(.,.)$ denotes the Hamming (or $l_1$) distance between two datasets. 
\end{definition}

\begin{definition} \label{def:sens}
Global sensitivity $s$ \cite{r4} of a function (or a query) $q: D \rightarrow \mathbb{R}^k$ is the smallest possible upper bound on the distance between the images of $q$ when applied to two neighboring datasets, i.e. the $l_1$ distance is bounded by $\|q(x)-q(\tilde{x})\|_1 \leq s$.
\end{definition} Verbally, sensitivity of a DP mechanism is the smallest possible upper bound on the images of a query (a mapping function) for neighbors. Hence it is a function of the type of the query having an opposite relationship with the privacy. Higher sensitivity of the query refers to a stronger requirement for privacy guarantee, consequently more noise is needed to achieve that guarantee.

\begin{definition}\label{def:dp}
$(\epsilon, \delta)-$ DP \cite{r5}:
A randomized algorithm $\mathcal{Y}$ is $(\epsilon, \delta)-$ differentially private if $\forall S \subseteq Range(\mathcal{Y})$ and for all neighboring datasets $x$ and $\tilde{x}$ within the domain of $\mathcal{Y}$ the following inequality holds.
\begin{equation}\label{ineq:dp}
\Pr\left[\mathcal{Y}(x) \in S\right] \leq \Pr\left[\mathcal{Y}(\tilde{x}) \in S \right] \exp\{\epsilon\}+\delta
\end{equation}
\end{definition}
Finally, we remind the reader of the Laplace distribution and Laplace DP mechanism that we will frequently make use of throughout the paper.
A differentially private system is named after the probability distribution of the perturbation applied onto the query output in the global setting. 
The Laplace distribution, also known as the double exponential distribution, is defined as
\begin{equation}
Lap(x; \mu, b)= \frac{1}{2b}\exp\left\{-\frac{|x-\mu|}{b}\right\}
\end{equation} with the location parameter equal to its mean $\mu$ and variance $2b^2$ where $b$ denotes the scale parameter.
\begin{definition}
Laplace mechanism \cite{r4} is defined for a function (or a query) $q: D \rightarrow \mathbb{R}^k$ as follows
\begin{equation}
\mathcal{Y}(x, q(.), \epsilon)= q(x)+(Z_1, \cdots, Z_k)
\end{equation} where $Z_i \sim Lap(b=s/\epsilon)$, $i=1,\cdots, k$ denote i.i.d. Laplace random variables.
\end{definition}

\subsection{Problem Definition \label{subsec:model}}

In this section, we  provide a detailed description of the addressed problem and describe the quantitative components for establishing a statistical threshold for detecting the attack. Imagine consider the following scenario. A differentially private mechanism adds Laplace noise $Z$ on the query output $q(x)= \sum_{i=1}^{n} X_i$ which uses the dataset in the following form $\mathbf{X}= \{X_1, \cdots, X_n \}$. The noisy output is denoted by $Y_0$ and defined as $\mathcal{Y}(x, q(.), \epsilon)=Y_0= \sum_{i=1}^{n}X_i+ Z$. An adversary modifies this public information -which has been released by the server- by adding one extra record that we denote by $X_a$. Here the addition is applied onto the existing dataset without any constraint on the value of $X_a$, i.e. it could take up on a positive as well as a negative value. The modified output
becomes $\left(\sum_{i=1}^{n}X_i+X_a \right)+Z$.

We define the following (simple) hypothesis testing problem in order to determine the threshold for deciding whether or not the adversary's attack is detected. 
\begin{equation}
\begin{aligned}\label{eq:ht}
H_{0} &: \textrm{defender\;does\;not\;detect}\; X_a  \\
H_{1} &: \textrm{defender\;detects}\; X_a 
\end{aligned}
\end{equation} The hypothesis testing problem defined above in (\ref{eq:ht}) can be translated into deciding on the DP noise distribution with its parameters. Here $H_{0}$ and $H_1$ correspond to DP noise following the probability distributions $p_0$ with mean $\mu_0$ and $ p_{1}$ with mean $\mu_1$, respectively. Therefore, the decision boils down to choosing between $Y_0-\sum_{i=1}^{n}X_i=Z$ and 
$Y_0-\left[\sum_{i=1}^{n}X_i+ X_a \right]$. Hence the shift in the location due to the addition of $X_a$ to the dataset is $\Delta \mu=\mu_1-\mu_0$. The corresponding likelihood ratio for this problem yields
\begin{equation}\label{likelihood_a-gen}
\Lambda = \frac{\mathcal{L}(p_1)}{\mathcal{L}(p_0)} \underset{H_1}{\overset{H_0}{\ou}} \kappa
\end{equation} where $\mathcal{L}(.)$ denotes the likelihood function for the corresponding hypothesis and $\kappa$ is some positive number to be determined. 
 Such a threshold defines the critical region in statistical hypothesis tests where the null hypothesis is rejected. 

This paper presents a precise trade-off between the attacker's advantage (or the bias induced by the adversary) $\Delta \mu$, the sensitivity $s$ and the privacy parameter $\epsilon$ of the differentially private mechanism to characterize the threshold for rejecting the null hypothesis, i.e. detecting the attack, as a function of the error probabilities. The upcoming subsection provides a detailed definition of possible error events and their corresponding probabilities. 

\subsection{Performance Criteria in Hypothesis Testing}

$\alpha$ and $\beta$ respectively denote type I and type II error probabilities which are defined for the hypothesis testing problem in (\ref{eq:ht}) as follows:
\begin{align}
P_{FA}=\alpha &= \Pr\left[H_0\; \textrm{reject}| H_0 \; \textrm{is}\;\textrm{true}\right] \label{eq:alpha_def} \\
P_{MD}=\beta &=\Pr\left[H_1 \textrm{reject}| H_1 \textrm{is}\;\textrm{true}\right].\label{eq:beta_def}
\end{align}
Based on the definition of $\alpha$, also called the \textit{probability of false-alarm}, we denote its complement by $\bar{\alpha}=1-\alpha$. Similarly, due to (\ref{eq:beta_def}), the complement of type II error probability (or the \textit{probability of mis-detection}) is denoted by $\bar{\beta}=1-\beta$. The probability of detection $\bar{\beta}$  (i.e. correctly deciding $H_1$) is also called the \textit{power of the test} in the statistics or the recall in machine learning terminology. 

According to the Neyman-Pearson Theorem \cite{r6}, the likelihood ratio compared against some positive integer defines the best critical region of size $\alpha$ for testing a simple hypothesis against an alternative simple hypothesis with the largest (or equally largest) power of the test. An extension of this result to testing against a composite alternative hypothesis is also possible. Such an extension is called \textit{uniformly most powerful test} since for a test with the best critical region of size $\alpha$ is conducted for each possible value of the alternative hypothesis. 
Once we define the critical region for deciding between $H_0$ and $H_1$ in (\ref{eq:ht}) as a function of $\Delta \mu$, the privacy parameter $\epsilon$ and the sensitivity $s$, we will derive the error probabilities and the power of the test analytically as well as compute and depict them numerically.

\section{Main Results \label{sec:results}}

We separate our results in two main groups for $(\epsilon, 0)$-DP in Laplace mechanisms for one-sided and two-sided hypothesis tests.

\subsection{One-sided test}
In this part, we will investigate both cases setting the alternative hypothesis $H_1$ as either $\mu_1>\mu_0$ (i.e. $\Delta \mu >0$) or $\mu_1<\mu_0$ (i.e. $\Delta \mu <0$). This corresponds to a one-sided hypothesis testing problem. The decision of choosing between the hypotheses in (\ref{eq:ht}) boils down to choosing between $Y_0-\sum_{i=1}^{n}X_i=Z \sim \mathrm{Lap}(z;\mu_0, s/\epsilon)$ and $Y_0-\left[\sum_{i=1}^{n}X_i+ X_a \right]=Z \sim \mathrm{Lap}(z;\mu_1, \theta (s/\epsilon))$ where $\theta \geq1$ as the measure of the change in the privacy budget of the system whereas $s$ and $\epsilon$ denote the sensitivity and privacy parameter, respectively. It should be noted that setting $\theta=1$ translates the hypothesis test in (\ref{eq:ht}) into testing only the location parameter of the Laplacian DP noise. 
Our goal is to derive a relationship between the privacy parameter, the significance level (or the probability of false alarm), type II error probability (or the probability of mis-detection) for the attacker to be successful, i.e. to fail to reject $H_0$, as a function of the bias $\Delta \mu$.  
The corresponding likelihood ratio for (\ref{eq:ht}) for the Laplace mechanism is given by
\begin{equation}\label{likelihood_a}
\Lambda = \frac{\mathcal{L}(p_1(\mu_1, b_1); z)}{\mathcal{L}(p_0(\mu_0, b_0); z)} \underset{H_1}{\overset{H_0}{\ou}} \kappa
\end{equation} where $\kappa$ is some positive number to be determined and $(\mu_i, b_i)$ for $i=0,1$ represent the location and scale parameters of the distributions to be tested. 

The next theorem states our first main result.
\begin{theorem}\label{theorem:k_lap}
The threshold of the best critical region of size $\alpha$ defined in (\ref{eq:alpha_def}) for deciding between the null hypothesis and its alternative of the one-sided hypothesis testing problem in (\ref{eq:ht}) for a Laplace mechanism with the largest power $\bar{\beta}$ is given as a function of the probability of false alarm $\alpha$, privacy parameter $\epsilon$ and global sensitivity $s$ as follows
\begin{equation}\label{eq:threshold}
k=
\begin{cases}
\mu_0+\frac{s}{\epsilon} \ln (2(1-\alpha)) & \textrm{if}\; \alpha \in [0, .5]\\
\mu_0-\frac{s}{\epsilon} \ln(2\alpha) & \textrm{if}\; \alpha \in [.5 , 1] 
\end{cases}
\end{equation} Then according to the adversary's hypothesis testing problem, the defender detects the attack for $\Delta \mu >0$ if the output of the Laplace mechanism $Y_0$ exceeds $(k+q(x))$ where $q(.)$ is the noiseless query output. Similarly, for $\Delta \mu<0$, the attack is detected if $Y_0 < q(x)+k$.
\end{theorem}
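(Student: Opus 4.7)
The plan is to apply the Neyman--Pearson lemma to the two simple Laplace hypotheses in \eref{eq:ht} with common scale $b=s/\epsilon$ and distinct means $\mu_0,\mu_1$. I would first write the likelihood ratio explicitly as
\begin{equation*}
\Lambda(z)=\frac{\mathrm{Lap}(z;\mu_1,s/\epsilon)}{\mathrm{Lap}(z;\mu_0,s/\epsilon)}=\exp\!\left(\tfrac{\epsilon}{s}\bigl(|z-\mu_0|-|z-\mu_1|\bigr)\right),
\end{equation*}
and observe that $|z-\mu_0|-|z-\mu_1|$ is a non-decreasing, piecewise-linear function of $z$ when $\Delta\mu>0$, and non-increasing when $\Delta\mu<0$. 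Hence the family has monotone likelihood ratio, so the Neyman--Pearson critical region collapses to a one-sided interval: $\{z>k\}$ for $\Delta\mu>0$ and $\{z<k\}$ for $\Delta\mu<0$. This monotone-likelihood-ratio property also delivers the uniformly most powerful claim that underlies the ``largest power $\bar\beta$'' statement in the theorem.

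Next I would pin $k$ down by enforcing the size constraint $\alpha=\Pr[Z>k\mid H_0]$ with $Z\sim\mathrm{Lap}(\mu_0,s/\epsilon)$. Because the Laplace density is only piecewise exponential, the inversion naturally splits into two subcases according to whether the threshold sits above or below the median $\mu_0$. On the branch $k\geq\mu_0$ the upper tail is $\tfrac{1}{2}\exp(-\tfrac{\epsilon}{s}(k-\mu_0))$, whose range is $[0,\tfrac{1}{2}]$; on the branch $k<\mu_0$ the tail equals $1-\tfrac{1}{2}\exp(\tfrac{\epsilon}{s}(k-\mu_0))$ with range $[\tfrac{1}{2},1]$. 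Solving each equation for $k$ produces the two expressions in \eref{eq:threshold}, with the crossover at $\alpha=1/2$ corresponding to $k=\mu_0$.

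Finally, I would translate the threshold on the noise variable $Z=Y_0-\sum_i X_i$ into a detection rule on the published quantity $Y_0$ by simply adding the noiseless query $q(x)=\sum_i X_i$: the critical events $\{Z>k\}$ and $\{Z<k\}$ become $\{Y_0>q(x)+k\}$ and $\{Y_0<q(x)+k\}$, which are exactly the two detection rules stated in the theorem. Symmetry of the Laplace density about $\mu_0$ ensures that the magnitude $|k-\mu_0|$ coincides for the two one-sided alternatives at the same $\alpha$, so a single expression for $k$ serves both cases.

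The step I expect to be the main obstacle is not any individual calculation but the bookkeeping: aligning the two pieces of the Laplace CDF with the two ranges of $\alpha$, while keeping track of the direction of the one-sided alternative and the sign of $k-\mu_0$, is where a sign error or a swapped-case slip would be easiest to introduce. I would therefore sanity-check the final formula at the endpoints $\alpha\in\{0,\tfrac{1}{2},1\}$ and verify that $k(\alpha)$ is continuous, monotone in $\alpha$, and recovers the expected limiting behavior (the threshold diverging in the tails and collapsing to the median at $\alpha=1/2$).
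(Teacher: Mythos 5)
Your proposal follows the same overall strategy as the paper's proof (Neyman--Pearson, then inverting the Laplace tail probability over a one-sided critical region to solve for $k$), and your tail-inversion step reproduces the paper's equations for $\alpha$ exactly. The genuine difference is in how the one-sided form of the critical region is justified, and here your route is both cleaner and narrower. You take the two hypotheses to have a \emph{common} scale $b=s/\epsilon$, which gives a monotone likelihood ratio and hence a rigorous reduction of the NP region to $\{z>k\}$ or $\{z<k\}$. The paper's setup, however, puts the alternative at $\mathrm{Lap}(\mu_1,\theta s/\epsilon)$ with $\theta\geq 1$; its proof expands $\Lambda$ as a three-piece function of $z$ and then brackets $\kappa$ between the two tail branches. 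Your argument only covers the paper's special case $\theta=1$: for $\theta>1$ the exponent $\frac{\epsilon}{s}|z-\mu_0|-\frac{\epsilon}{\theta s}|z-\mu_1|$ grows like $\frac{\epsilon}{s}|z|(1-1/\theta)$ in \emph{both} tails, so the likelihood ratio is not monotone and the exact NP region $\{\Lambda>\kappa\}$ is a union of two tails rather than a single half-line. This is a weakness the paper shares (it asserts the one-sided region anyway), but you should either state the restriction to $\theta=1$ explicitly or address why the one-sided region is still adopted when $\theta>1$. Note also that the threshold formula itself is unaffected, since it is computed entirely under $H_0$, whose scale is $s/\epsilon$ regardless of $\theta$.

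One further point in your favor: your bookkeeping, carried through, assigns $k=\mu_0-\frac{s}{\epsilon}\ln(2\alpha)$ to the branch $k\geq\mu_0$, i.e.\ to $\alpha\in[0,.5]$, and $k=\mu_0+\frac{s}{\epsilon}\ln(2(1-\alpha))$ to $\alpha\in[.5,1]$ --- the opposite pairing from the one displayed in the theorem. Your proposed endpoint check at $\alpha\to 0$ confirms your pairing is the right one (the threshold must diverge to $+\infty$ there, which only the $-\frac{s}{\epsilon}\ln(2\alpha)$ branch does), so the case labels in the theorem statement appear to be swapped.
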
 

\begin{remark} 
The decision rule given by Theorem \ref{theorem:k_lap} is equivalent to comparing the Laplace noise to the threshold $k$ as it will be shown by the following proof. For positive bias, the critical region becomes $(k, \infty)$ thus, $z\underset{H_1}{\overset{H_0}{\ou}} k$.
By analogy if $\Delta \mu <0$, the critical region for the Laplace noise becomes $(-\infty, k)$. 
\end{remark}

\begin{proof}
According to Neyman-Pearson theorem \cite{r6}, each point where $\Lambda \geq \kappa$ composes the best critical region of size $\alpha$ as defined in (\ref{eq:alpha_def}) for this simple hypothesis testing problem. Using the ratio in (\ref{likelihood_a}), we will determine the threshold $k$ as a function of the best critical region, the power of the test, the privacy budget and lastly, the attack. 

We expand $\Lambda$ as follows.
\begin{align} \label{eq:likelihood_a_gen}
\Lambda &=\frac{\frac{1}{2\theta (s/\epsilon)}\exp \left\{-\frac{|z-\mu_1|}{\theta (s/\epsilon)}\right\}}{\frac{1}{2s/\epsilon}\exp \left\{-\frac{|z-\mu_0|}{s/\epsilon}\right\}} \\
&=\frac{1}{\theta} \exp \left\{\frac{\epsilon |z-\mu_0|}{ s}-\frac{\epsilon|z-\mu_1|}{\theta s}\right\}
\end{align} 
The likelihood ratio in (\ref{eq:likelihood_a_gen}) can be summarized by the following piecewise function based on the possible relationships between $\mu_1$ and $z$ due to the absolute value in the exponent of the probability distribution for $\mu_1<\mu_0$.
\begin{equation}\label{eq:lambda_deltamup}
  \Lambda_I= 
  \begin{cases} 
  \frac{1}{\theta} \exp \left\{\frac{\epsilon}{\theta s}\left(z(1-\theta)+\theta \mu_0-\mu_1 \right)\right\}  & \textrm{if}\; z < \mu_1 \\
  \frac{1}{\theta} \exp \left\{-\frac{\epsilon}{\theta s}(z(1+\theta)-\theta \mu_0 - \mu_1)\right\}    & \textrm{if}\; z \in [\mu_1,\mu_0] \\
	\frac{1}{\theta} \exp \left\{-\frac{\epsilon}{\theta s}(z(1-\theta)+\theta \mu_0-\mu_1\right\}    & \textrm{if}\; z \geq \mu_0
  \end{cases}
\end{equation}
Equivalently, $ \Lambda_I$ is confined in the interval $[\frac{1}{\theta} \exp \left\{-\frac{\epsilon}{\theta s}(z(1-\theta)+\theta \mu_0-\mu_1\right\},$ $ \frac{1}{\theta} \exp \left\{\frac{\epsilon}{\theta s}\left(z(1-\theta)+\theta \mu_0-\mu_1 \right)\right\}]$. 
On the other hand, for $\mu_1>\mu_0$, the corresponding likelihood ratio for the hypotheses in (\ref{eq:ht}) yields
\begin{equation}\label{eq:lambda_deltamun}
  \Lambda_{II}= 
  \begin{cases} 
  \frac{1}{\theta}\exp \left\{ \frac{\epsilon}{\theta s}(z(1-\theta)+\theta \mu_0-\mu_1)\right\}  & \textrm{if}\; z < \mu_0 \\
  \frac{1}{\theta}\exp \left\{\frac{\epsilon}{\theta s}(z (1+\theta)-\theta \mu_0- \mu_1)\right\}    & \textrm{if}\; z \in [\mu_0, \mu_1] \\
	\frac{1}{\theta}\exp \left\{-\frac{\epsilon}{\theta s}(z(1-\theta)+\theta \mu_0-\mu_1)\right\}    & \textrm{if}\; z \geq \mu_1
  \end{cases}
\end{equation}
To be able to determine a threshold for deciding between the hypotheses in (\ref{eq:ht}), we compute the false alarm rate $\alpha$ and the mis-detection error $\beta$ (and the power of the test, that is $1-\beta$) applying the Neyman-Pearson lemma that guarantees maximizing the power of the hypothesis test for a given false alarm rate $\alpha$.

\paragraph{Derivation of $\alpha$:} Based on the definition in (\ref{eq:alpha_def}), for $\Delta \mu >0$ the probability of raising a false-alarm is derived by integrating the following probability distribution over the critical region
\begin{align}
\alpha&=\Pr[H_0\;\textrm{reject}|H_0\; \textrm{is}\;\textrm{true}] \\
&= \int_{k}^{\infty} \frac{\epsilon}{2 s} \exp \left\{-\frac{\epsilon|z-\mu_0|}{s}\right\} dz,
\end{align} which is further expanded out in two possible ways. First for $k<\mu_0$, we get
\begin{align}
\alpha&=1-\int_{-\infty}^{k} \frac{\epsilon}{2s}\exp \left\{\frac{\epsilon}{s}(z-\mu_0)\right\} dz \\
&=1-\frac{1}{2} \exp \left\{\frac{\epsilon }{s} (k-\mu_0) \right\} \label{eq:alpha_i}
\end{align} 
Second, we have for $k\geq \mu_0$
\begin{align}
\alpha&= \int_{k}^{\infty}\frac{\epsilon}{2s} \exp \left\{-\frac{\epsilon}{s} (z-\mu_0) \right\}dz \\
&=\frac{1}{2} \exp\left\{-\frac{\epsilon}{s} (k-\mu_0) \right\} \label{eq:alpha_ii}
\end{align}
Rewriting (\ref{eq:alpha_i}) and (\ref{eq:alpha_ii}) as an equality for $k$, we obtain the piecewise function (\ref{eq:threshold}) as the threshold in Theorem \ref{theorem:k_lap} as a function of $\alpha$. If the bias induced by the adversary is negative, i.e. $\Delta \mu <0$, then the conditions to obtain (\ref{eq:alpha_i}) and (\ref{eq:alpha_ii}) are swapped. For $\Delta \mu <0$ and $k<\mu_0$, we get (\ref{eq:alpha_ii}) for the probability of false-alarm. 

\paragraph{How to determine $\kappa$?:} According to the piecewise expansions of likelihood ratio functions in (\ref{eq:lambda_deltamun}) and (\ref{eq:lambda_deltamup}) respectively for $\Delta \mu <0$ and $\Delta \mu >0$, we have the intervals for $\kappa$ given by (\ref{kappa_int1}) and (\ref{kappa_int2}) on top of the next page since $\Lambda  \underset{H_1}{\overset{H_0}{\ou}} \kappa$. 
\newcounter{tempequationcounter}
\begin{figure*}[h!]
\normalsize
\setcounter{MYtempeqncnt}{\value{equation}}
\setcounter{equation}{20}
\begin{eqnarray}
\frac{1}{\theta} \exp \left\{-\frac{\epsilon}{\theta s}(z(1-\theta)+\theta \mu_0-\mu_1\right\} & < \kappa< & \frac{1}{\theta} \exp \left\{\frac{\epsilon}{\theta s}\left(z(1-\theta)+\theta \mu_0-\mu_1 \right)\right\}, \; \textrm{for}\; \Delta \mu <0 \label{kappa_int1} \\
\frac{1}{\theta}\exp \left\{ \frac{\epsilon}{\theta s}(z(1-\theta)+\theta \mu_0-\mu_1)\right\}&<\kappa <& 
\frac{1}{\theta}\exp \left\{-\frac{\epsilon}{\theta s}(z(1-\theta)+\theta \mu_0-\mu_1)\right\} , \; \textrm{for}\; \Delta \mu >0 \label{kappa_int2} 
\end{eqnarray}
\setcounter{equation}{\value{MYtempeqncnt}}
\hrulefill 
\vspace*{4pt}
\end{figure*}
Therefore, the null hypothesis is rejected for
\addtocounter{equation}{2}
\begin{equation}
\frac{1}{\theta} \exp \left\{-\frac{\epsilon}{\theta s}(z(1+\theta)-\theta \mu_0 - \mu_1)\right\}  < \kappa\; \textrm{for}\; \Delta \mu <0
\end{equation} or
\begin{equation}
\frac{1}{\theta}\exp \left\{\frac{\epsilon}{\theta s}(z (1+\theta)-\theta \mu_0- \mu_1)\right\}> \kappa\; \textrm{for}\; \Delta \mu >0
\end{equation} Due to the threshold of the critical region defined in Theorem \ref{theorem:k_lap}, we finally get $\kappa$ as follows
\begin{align}
\kappa &= \frac{1}{\theta} \exp \left\{-\frac{\epsilon}{\theta s}(k(1+\theta)-\theta \mu_0 - \mu_1)\right\}\; \textrm{for}\; \Delta \mu <0 \\
\kappa &= \frac{1}{\theta}\exp \left\{\frac{\epsilon}{\theta s}(k (1+\theta)-\theta \mu_0- \mu_1)\right\} \;\textrm{for}\; \Delta \mu >0
\end{align} 
\end{proof}
\paragraph{Derivation of the power of the test:} The power of the hypothesis test is the probability of rejecting the null hypothesis $H_0$ given that the alternative hypothesis $H_1$ is true. Let $\bar{\beta}$ denote the complement of the type II error $\beta$, we have using the definition in (\ref{eq:beta_def}) for $\Delta \mu >0$ and $ k< \mu_1 $
\begin{align}\label{eq:beta_i}
\bar{\beta}&=1-\Pr\left[H_1 \textrm{reject}| H_1 \textrm{is}\;\textrm{true}\right]\\
&=\int_{k}^{\infty} \frac{\epsilon}{2\theta s} \exp\left\{\frac{\epsilon (\mu_1-z)}{\theta s}\right\} dz \\
&=\frac{1}{2}\exp\left\{\frac{\epsilon (\mu_1-k)}{\theta s}\right\}
\end{align}
As for $k>\mu_1$, the power function becomes
\begin{align}\label{eq:beta_ii}
\bar{\beta}&=1-\int_{-\infty}^{k} \frac{\epsilon}{2 \theta s} \exp\left\{\frac{\epsilon(z-\mu_1)}{\theta s}\right\} dz\\
&=1-\frac{1}{2}\exp\left\{\frac{\epsilon (k-\mu_1)}{\theta s}\right\}
\end{align} On the contrary for negative bias $\Delta \mu <0$, the conditions based on $k$ and $\mu_1$ to obtain (\ref{eq:beta_i}) and (\ref{eq:beta_ii}) are swapped. In Section \ref{sec:numeric}, we present numerical evaluation results for Theorem 1 using the probability of false-alarm $P_{FA}$ and power of the test $1-P_{MD}=\bar{\beta}$ to draw receiving operating characteristic curves (ROC) as performance analysis.

\begin{remark}
Special case of $\theta=1$ and $|\Delta \mu_1| \leq s$: 
Setting $\theta=1$ in (\ref{eq:likelihood_a_gen}), it can be easily observed that both likelihood ratios $\Lambda_I$ in (\ref{eq:lambda_deltamup}) and $\Lambda_{II}$ in (\ref{eq:lambda_deltamun}) are included in the following interval $[\exp \left\{-\Delta \mu \frac{\epsilon}{s}\right\}, \exp \left\{\Delta \mu \frac{\epsilon}{s}\right\}]$. 
Applying also $|\Delta \mu_1| \leq s$ onto the likelihood ratio $\Lambda$ in (\ref{eq:likelihood_a_gen}), we get $\exp \{-\epsilon\} \leq \Lambda \leq \exp \{ \epsilon \}$ which is the $(\epsilon,0)-$ DP.
\end{remark}


\subsection{Two-sided test}

As an alternative solution to the same problem of detecting the attacker through determining the shifts and changes in the location and deviation of the DP noise using a one-sided hypothesis test, a two-sided test could provide a more realistic solution where it is not possible to assume the direction of the shift induced by the adversary. Hence the hypothesis test in (\ref{eq:ht}) can be conducted for determining the (possible) change in the distribution of the DP noise in both directions where the null hypothesis remains the same as 
$H_{0} :Z \sim \mathrm{Lap}(\mu_0, s/\epsilon)$ to test against the alternative $H_1: Z \sim \mathrm{Lap}(\mu_1, \theta s/\epsilon)$. 

This translates to choosing between
\begin{align}
H_0 &: \mu=\mu_0, b= s/\epsilon \label{eq:null_two_sided}\\
H_{1}&:\textrm{at least one of the equalities does not hold}\label{eq:alt_two_sided}
\end{align} where $\mu$ denotes the location parameter and $b$ denoted the scale parameter of any Laplace distribution. The alternative hypothesis can also be stated with the parameters $\mu=\mu_1, b= \theta s/\epsilon$ where $\theta \geq 1$.

In this two-sided test, there are two thresholds on each side of the origin to be determined for the critical region each with a size of $\alpha/2$. Let $k_1$ and $k_2$ denote the threshold greater and smaller than the origin, respectively. The next theorem presents the thresholds for detecting the attack as a function of the probability of false-alarm.
\begin{theorem}\label{theorem_threshold_two-sided}
The threshold of the best critical region of size $\alpha$ defined in (\ref{eq:alpha_def}) for choosing between the null hypothesis and its alternative of the two-sided hypothesis testing problem in (\ref{eq:null_two_sided})-(\ref{eq:alt_two_sided}) for a Laplace mechanism with the largest power $\bar{\beta}$ is
\begin{eqnarray}
k_1&=&\mu_0 -(s/\epsilon) \log \alpha \label{eq:thresholds_twosidedI}\\
k_2&=&\mu_0+(s/\epsilon) \log \alpha \label{eq:thresholds_twosidedII}
\end{eqnarray} Then according to the adversary's hypothesis testing problem, the defender fails to detect the attack when the output of the Laplace mechanism $Y_0$ is confined in $(q(x)+k_2, q(x)+k_1)$ where $q(.)$ is the noiseless query output.
\end{theorem}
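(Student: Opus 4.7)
The plan is to mirror the Neyman--Pearson style integration used in the proof of Theorem \ref{theorem:k_lap}, but with the critical region split into two tails. Under $H_0$, the noise $Z$ is $\mathrm{Lap}(\mu_0, s/\epsilon)$, whose density is symmetric about $\mu_0$, while the alternative in \eqref{eq:alt_two_sided} allows shifts of the location in either direction as well as a scale inflation by $\theta\geq 1$. Since no uniformly most powerful test exists against such a composite alternative, I would instead seek the equal-tailed (hence unbiased) level-$\alpha$ test whose critical region is $(-\infty,k_2]\cup[k_1,\infty)$ with $k_2<\mu_0<k_1$, allocating probability $\alpha/2$ to each tail under $H_0$. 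This is the natural two-sided counterpart of the threshold derivation underlying Theorem \ref{theorem:k_lap}.

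Next, I would pin down the two thresholds by solving tail-probability equations under $H_0$. For the upper tail $[k_1,\infty)$ with $k_1>\mu_0$, integrating the right branch of the Laplace density gives an expression of the form $(1/2)\exp\{-\epsilon(k_1-\mu_0)/s\}$; setting this equal to $\alpha/2$, the factors of $1/2$ cancel and solving for $k_1$ yields $\mu_0-(s/\epsilon)\log\alpha$, matching \eqref{eq:thresholds_twosidedI}. The lower-tail computation over $(-\infty,k_2]$ is entirely symmetric and produces $k_2=\mu_0+(s/\epsilon)\log\alpha$, which is \eqref{eq:thresholds_twosidedII}. Since $\log\alpha<0$ for $\alpha\in(0,1)$, these thresholds automatically satisfy $k_2<\mu_0<k_1$, so the critical region is well defined.

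To finish, I would translate the acceptance region for $Z$ into a region for the observed output: using $Z=Y_0-q(x)$ from the model in Section \ref{subsec:model}, failure to detect the attack corresponds to $Z\in(k_2,k_1)$, which is equivalent to $Y_0\in(q(x)+k_2,q(x)+k_1)$, as claimed. The main obstacle is the compositeness of $H_1$: unlike the simple-versus-simple situation of Theorem \ref{theorem:k_lap}, Neyman--Pearson cannot be invoked directly to single out the critical region, and the equal-tailed split has to be justified by appealing to the symmetry of the Laplace null together with the unbiasedness requirement. Articulating this optimality argument carefully, or at least explaining why the equal-tailed rule is the canonical choice in this privacy setting, is the delicate step of the proof; everything else reduces to the two elementary tail integrals above.
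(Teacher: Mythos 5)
Your proposal is correct and follows essentially the same route as the paper: the published proof likewise splits the rejection region into two tails, equates each tail integral of the $\mathrm{Lap}(\mu_0,s/\epsilon)$ density to $\alpha/2$, and solves for $k_1$ and $k_2$. If anything, you are more careful than the paper, which invokes the ``best critical region'' language without addressing the composite-alternative issue or justifying the equal-tailed split that you correctly flag as the delicate point.
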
 

\begin{proof} The null hypothesis cannot be rejected if the noisy output of the Laplace mechanism confined in the interval $(k_2, k_1)$. 
First, we begin with the derivation of threshold for the output of the DP mechanism. 
The probability of raising a false-alarm or having a type I error is derived as follows.
\begin{align}
\alpha &=\Pr[H_0\;\textrm{reject}|H_0\; \textrm{is}\;\textrm{true}] \\
&= \int_{-\infty}^{k_2} \frac{\epsilon}{2 s} \exp \left\{\frac{\epsilon (z-\mu_0)}{s}\right\} dz \label{eq:alpha-twosidedI}\\
&+\int_{k_1}^{\infty} \frac{\epsilon}{2 s} \exp \left\{-\frac{\epsilon (z-\mu_0)}{s}\right\} dz \label{eq:alpha-twosidedII}
\end{align} 
Each addend of $\alpha$ corresponds to one half of the probability of false-alarm. Equating each integral to $\alpha/2 $ and rewriting the equalities in terms of $k_1$ and $k_2$, we get the thresholds (\ref{eq:thresholds_twosidedI}) and (\ref{eq:thresholds_twosidedII}).
\end{proof}

\subsubsection*{A trade-off between $\mu_1$, $s$ and $\epsilon$ for detecting the attacker-Two-sided test}

On the defender's end, the DP mechanism wants to detect the attacker to preserve DP. Using the threshold presented in Theorem \ref{theorem_threshold_two-sided}, we can determine an interval to confine the mean of the attacker's advantage to be detected by the DP mechanism, i.e. for the null hypothesis $H_0$ to be rejected. Alternatively, such an interval can be converted for the privacy parameter $\epsilon$ as a function of error probabilities, the attack and the sensitivity. The following result, Corollary \ref{corr_twosided}, presents upper and lower bounds on the attacker's advantage so that the defender detects the attack. 

There are two possible cases w.r.t. the relationship between $\mu_0$ and $\mu_1$. The alternative hypothesis in this two-sided test also states that these two parameters are unequal. As we have discussed earlier in the derivation of the threshold for determining the critical region in Laplace mechanisms, whether $\mu_0>\mu_1$ or $\mu_1>\mu_0$ directly effects the likelihood ratio function, and thus the condition to reject the null hypothesis. Let us the consider the first possible case of $\mu_0<\mu_1$. In this case, we have either $k_2<\mu_0< k_1 <\mu_1$ or $k_2<\mu_0<\mu_1< k_1 $. On the contrary for $\mu_1<\mu_0$, we have for the thresholds either of the cases $\mu_1<k_2<\mu_0<k_1$ or $k_2<\mu_1<\mu_0<k_1$. These different cases can be used for deriving an interval to include $\Delta \mu$ as a function of the error probabilities, privacy budget and the sensitivity.

\begin{corollary} \label{corr_twosided}
The absolute bias $|\Delta \mu|=|\mu_1-\mu_0|$ induced by the adversary is confined in the following interval so that the defender detects $X_a$ and preserves $(\epsilon, 0)$- DP
\begin{equation} \label{interval_twos}
\frac{s}{\epsilon}\log \left(\alpha \bar{\beta}^{\theta} \right) < \Delta \mu < \frac{s}{\epsilon}\log \left(\frac{1}{\alpha \bar{\beta}^{\theta}}\right) 
\end{equation} for $\theta \geq 1$ where $\alpha$ and $\bar{\beta}$ respectively are the significance level and the power of the test of (\ref{eq:alt_two_sided}).
\end{corollary}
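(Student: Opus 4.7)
The plan is to compute the power $\bar\beta$ of the two-sided Neyman--Pearson test from Theorem 2 under $H_1$ as a closed-form function of the attacker's bias $\Delta\mu$, and then rearrange to recover an admissible range for $\Delta\mu$. From Theorem 2 the acceptance region is the open interval $(k_2, k_1)$ with $k_1 = \mu_0 - (s/\epsilon)\log\alpha$ and $k_2 = \mu_0 + (s/\epsilon)\log\alpha$, and under $H_1$ the noise $Z \sim \mathrm{Lap}(\mu_1, \theta s/\epsilon)$. The power thus decomposes into two Laplace tails, $\bar\beta = \Pr[Z \leq k_2 \mid H_1] + \Pr[Z \geq k_1 \mid H_1]$, each of which evaluates either to $\tfrac{1}{2}\exp(-\epsilon|k_i-\mu_1|/(\theta s))$ or to its complement $1 - \tfrac{1}{2}\exp(-\epsilon|k_i-\mu_1|/(\theta s))$, according as $\mu_1$ lies on the near or far side of $k_i$.

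Next, I would run through the four orderings of $\mu_0, \mu_1, k_1, k_2$ catalogued in the paragraph immediately above the corollary. In the representative central case $k_2 < \mu_0 < \mu_1 < k_1$, substituting the explicit thresholds and $\Delta\mu = \mu_1 - \mu_0$ collapses both tails to $\tfrac{1}{2}\alpha^{1/\theta}\exp(\pm\epsilon\Delta\mu/(\theta s))$, yielding the compact identity
\[
\bar\beta \;=\; \alpha^{1/\theta}\cosh\bigl(\epsilon\Delta\mu/(\theta s)\bigr).
\]
The three remaining sub-cases (negative $\Delta\mu$, or $\mu_1$ sitting outside $(k_2, k_1)$) admit analogous closed forms by symmetry of the Laplace density about $\mu_1$ and by the relabeling $k_1 \leftrightarrow k_2$.

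To finish, I would invert the identity and isolate $|\Delta\mu|$, either via the explicit $\mathrm{arccosh}$ formula or via the elementary bound $\cosh(y) \leq e^{|y|}$, so as to produce an inequality equivalent to $\alpha\bar\beta^{\theta}\exp(\epsilon|\Delta\mu|/s) < 1$. Taking logarithms then recovers the symmetric two-sided interval $(s/\epsilon)\log(\alpha\bar\beta^{\theta}) < \Delta\mu < (s/\epsilon)\log(1/(\alpha\bar\beta^{\theta}))$ claimed by the corollary, while $(\epsilon,0)$-DP is preserved throughout because the Laplace noise scale $s/\epsilon$ is unchanged. The main obstacle I anticipate is the case bookkeeping: each of the four orderings of $\mu_0, \mu_1, k_1, k_2$ produces a slightly different closed form for $\bar\beta$ (with $1-\exp$ replacing $\exp$ whenever $\mu_1$ sits outside $(k_2, k_1)$), and one has to verify that every such sub-case is consistent with the single symmetric envelope above rather than producing a distinct or tighter bound. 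Once the case analysis is consolidated, the remaining step is a short exponentiation-and-logarithm calculation.
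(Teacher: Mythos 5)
Your exact computation of the power and the resulting identity $\bar\beta=\alpha^{1/\theta}\cosh\bigl(\epsilon\Delta\mu/(\theta s)\bigr)$ are correct in the central ordering $k_2<\mu_0<\mu_1<k_1$, but the final inversion step does not produce the corollary's interval, and this is a genuine gap rather than case bookkeeping. From your identity, $\cosh(y)\le e^{|y|}$ gives $\bar\beta^{\theta}\le\alpha\,e^{\epsilon|\Delta\mu|/s}$, i.e.\ a \emph{lower} bound $|\Delta\mu|\ge(s/\epsilon)\log(\bar\beta^{\theta}/\alpha)$, while the complementary bound $\cosh(y)\ge\tfrac12 e^{|y|}$ gives the upper bound $|\Delta\mu|\le(s/\epsilon)\log(2^{\theta}\bar\beta^{\theta}/\alpha)$. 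In both, $\bar\beta^{\theta}$ sits in the numerator, whereas the corollary has it in the denominator; the inequality you aim for, $\alpha\bar\beta^{\theta}e^{\epsilon|\Delta\mu|/s}<1$, equals $\alpha^{2}\bigl(\cosh(y)e^{|y|}\bigr)^{\theta}<1$ under your identity and therefore fails for large $|\Delta\mu|$. A concrete check: for $\theta=1$, $\alpha=0.05$, $\bar\beta=0.9$, your identity forces $\epsilon|\Delta\mu|/s=\mathrm{arccosh}(18)\approx 3.58$, which lies \emph{outside} the corollary's interval $|\Delta\mu|<(s/\epsilon)\log(1/0.045)\approx 3.10\,(s/\epsilon)$. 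So your route, carried out correctly, contradicts the stated bound rather than recovering it.

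The divergence comes from the fact that the paper never uses the exact power. Its proof imposes that \emph{each} tail of $\bar\beta$ equals $\bar\beta/2$ (mirroring the $\alpha/2$ split under $H_0$), which yields a second pair of threshold expressions $k_1=\mu_1-(\theta s/\epsilon)\log\bar\beta$ and $k_2=\mu_1+(\theta s/\epsilon)\log\bar\beta$; the interval then follows by writing $k_2<k_1$ with $k_2$ in its $\alpha$-form and $k_1$ in its $\bar\beta$-form, and vice versa. Your exact computation shows precisely why that equal-split assumption fails when $\mu_1\neq\mu_0$: the two tails under $H_1$ are $\tfrac12\alpha^{1/\theta}e^{-\epsilon\Delta\mu/(\theta s)}$ and $\tfrac12\alpha^{1/\theta}e^{+\epsilon\Delta\mu/(\theta s)}$, which are unequal. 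To reproduce the corollary you must adopt the paper's per-tail convention for $\bar\beta$ rather than the exact power; to keep your exact power you would have to restate the bound with $\mathrm{arccosh}(\bar\beta\alpha^{-1/\theta})$, or with $\bar\beta^{\theta}/\alpha$ in place of $1/(\alpha\bar\beta^{\theta})$, in which case the signs work out.
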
 
\begin{proof}
We begin with deriving the power of the two-sided test (\ref{eq:alt_two_sided}) as a function of the thresholds of the critical region.
\paragraph{Derivation of the power of the test:} The probability of correctly detecting the attacker as follows.
\begin{align}
\bar{\beta}&=\int_{-\infty}^{k_2} \frac{\epsilon}{2 \theta s} \exp \left\{\frac{\epsilon (z-\mu_1)}{\theta s}\right\} dz  \label{line1}\\
&+\int_{k_1}^{\infty} \frac{\epsilon}{2 \theta s} \exp \left\{-\frac{\epsilon (z-\mu_1)}{\theta s}\right\} dz \label{line2}\\
&=\frac{1}{2} \exp\left\{\frac{\epsilon(k_2-\mu_1)}{\theta s}\right\}+\frac{1}{2} \exp\left\{-\frac{\epsilon(k_1-\mu_1)}{\theta s}\right\} \label{eq:barbeta_two}
\end{align}
Each addend in (\ref{line2}), correspond to $\bar{\beta}/2$ and can be rewritten for the thresholds as functions of the power of the test as $k_1=\mu_1-\frac{s}{\epsilon}\log (\bar{\beta})^{\theta}$ and $k_2=\mu_1+\frac{s}{\epsilon}\log (\bar{\beta})^{\theta}$. Combining this with $k_2<k_1$ for the case $\mu_0<\mu_1$, the bias is lower bounded as follows
\begin{align}
\mu_0+\frac{s}{\epsilon} \log \alpha &< \mu_1-\frac{\theta s}{\epsilon}\log \bar{\beta}\\
\frac{s}{\epsilon} \log \left(\alpha \bar{\beta}^{\theta}\right) & < \Delta \mu \label{eq:lb_pos}
\end{align} As for the upper bound we have
\begin{align}
\mu_1+\frac{\theta s}{\epsilon} \log \bar{\beta} &< \mu_0-\frac{s}{\epsilon}\log \alpha\\
 \Delta \mu & < \frac{s}{\epsilon} \log \left(\frac{1}{\alpha \bar{\beta}^{\theta}} \right) \label{eq:lb_neg}
\end{align} 
By analogy, we have get the swapped upper and lower bound for $-\Delta \mu $ for the second case of $\mu_1<\mu_0$. 
Finally, we get the interval for the absolute bias as given by (\ref{interval_twos}). This concludes the proof of the corollary.
\end{proof}


\section{Relative Entropy \label{sec:KL}}

This section is dedicated to the derivation of relative entropy or Kullback-Leibler (KL) divergence between two Laplace distributions and its adaptation to adversarial classification through \textit{KL-DP}. 
\begin{definition}[KL-DP, \cite{r2}]
For a randomized mechanism $P_{Y|X}$ that guarantees $\epsilon-$ KL-DP, if the following inequality holds for all its neighboring datasets $x$ and $\tilde{x}$.
\begin{equation} \label{eq:KLdp}
D(P_{Y|X=x}||P_{Y|X=\tilde{x}}) \leq \exp\{\epsilon\}
\end{equation}
\end{definition}
In \cite[Theorem 1]{r2}, KL-DP is proven to satisfy the following chain of inequalities
\begin{equation}
(\epsilon, 0)-\textrm{DP} \geq \textrm{KL}-\textrm{DP} \geq (\epsilon, \delta)-\textrm{DP}
\end{equation}
For the described problem and the associated model described in Section \ref{subsec:model}, the neighboring datasets could be imagined as those where the output of the query is $\sum_{i=1}^n X_i$ before the attack and $(\sum_{i=1}^n X_i + X_a)$ after the attack. 
The corresponding distributions are considered as the DP noise with and without the induced value of $X_a$ by the attacker as in our original hypothesis testing problem in (\ref{eq:ht}). To be consistent with the hypotheses in (\ref{eq:ht}), we set $P_{Y|X=x} ~ Lap(\mu_0, s/\epsilon)$ and for the neighbor, we have $ Lap(\mu_1, \theta s/\epsilon)$. 

Hereafter, we derive the relative entropy between $p_0 \sim Lap (\mu_0, b_0)$ and $p_1 \sim Lap (\mu_1, b_1)$.
\begin{align}
&D(p_0||p_1)= \int  p_0(z) \log\frac{p_0(z)}{p_1(z)} dz \\
&=\mathbb{E}_{p_0} \left[\log \frac{1/2b_0\exp\left\{-\frac{|z-\mu_0|}{b_0}\right\}}{1/2b_1 \exp \left\{-\frac{|z-\mu_1|}{b_1}\right\}}\right]\\
&=\log\left(\frac{b_1}{b_0}\right)-\frac{1}{b_0} \mathbb{E}_{p_0}[|z-\mu_0|]+\frac{1}{b_1} \mathbb{E}_{p_0}[|z-\mu_1|]\\
&\overset{(a)}{=}\log \left(\frac{b_1}{b_0}\right)-1+\frac{1}{b_1}\mathbb{E}_{p_0}[|z-\mu_1|] \label{eq:(a)}
\end{align}

In step (a), we substituted $\mathbb{E}_{p_0}[|z-\mu_0|]$ by $b_0$ since for $z\sim Lap(\mu,b)$ then $|z-\mu|\sim Exp(1/b)$ and the corresponding mean for the exponential random variable is the inverse of its parameter. For the last term, $\frac{1}{b_1}\mathbb{E}_{p_0}[|z-\mu_1|]$, we must consider two different cases due to the absolute value in the exponent of the Laplace distribution. In the following first expansion, the two distributions are centered around $\mu_0$ and $\mu_1$ where $\mu_0<\mu_1$. 
\begin{align}
\frac{1}{b_1}\mathbb{E}_{p_0}[|z-\mu_1|]&=\frac{1}{b_1} \int_{p_0} |z-\mu_1|\frac{1}{2 b_0}\exp \left \{-\frac{|z-\mu_0|}{b_0} \right\}dz\\
&=\frac{b_0}{2b_1} \exp \left\{\frac{\mu_0-\mu_1}{b_0}\right\} + \frac{\mu_1-\mu_0}{b_1}\label{eq:lastterm_1}
\end{align}
Substituting (\ref{eq:lastterm_1}) into (\ref{eq:(a)}), we finally get the KL divergence between two Laplacians as
\begin{equation}\label{eq:KL_laplace}
D_{I}(p_0||p_1)= \log \left(\frac{b_1}{b_0}\right)-1 + \frac{b_0}{b_1} \exp \left\{\frac{\mu_0-\mu_1}{b_0}\right\} - \frac{\mu_0-\mu_1}{b_1}
\end{equation}
Simplifying (\ref{eq:KL_laplace}) for $b_0=s/\epsilon$ and $b_1=\theta (s /\epsilon)$ and $\mu_1-\mu_0=\Delta \mu$ for the hypothesis testing problem defined in (\ref{eq:ht}), we finally get
\begin{equation}\label{eq:KL_equalscale}
D_{I}(p_0||p_1)_{\Delta \mu>0}= \log \theta -1 +\frac{1}{\theta} \exp\left\{-\frac{\Delta \mu \epsilon}{s}\right\}+\frac{\Delta \mu \epsilon}{\theta s}
\end{equation}

As for the case of $\mu_0>\mu_1$, we have
\begin{align}\label{eq:lastterm_2}
\frac{1}{b_1}\mathbb{E}_{p_0}[|z-\mu_1|]&=\frac{1}{b_1} \int_{p_0} |z-\mu_1|\frac{1}{2 b_0}\exp \left \{-\frac{|z-\mu_0|}{b_0} \right\}dz\\
&=\frac{b_0}{b_1} \exp \left\{\frac{\mu_1-\mu_0}{b_0}\right\} + \frac{b_0}{b_1}
\end{align} Finally, plugging (\ref{eq:lastterm_2}) into (\ref{eq:(a)}), we get the KL divergence between $p_0$ and $p_1$ for positive $\mu_0$ where $\mu_0>\mu_1$ as follows.
\begin{equation}\label{eq:KL_laplaceII}
D_{II}(p_0||p_1)= \log \left(\frac{b_1}{b_0}\right)-1 + \frac{b_0}{b_1} \exp \left\{\frac{\mu_1-\mu_0}{b_0}\right\} + \frac{b_0}{b_1}
\end{equation}
Setting $\mu_1-\mu_0=\Delta \mu$, $b_0=s/\epsilon$ and $b_1=\theta (s /\epsilon)$ in (\ref{eq:KL_laplaceII}), $D_{II}(p_0||p_1)$ yields
\begin{equation}\label{eq:KL_equalscaleII}
D_{II}(p_0||p_1)_{\Delta \mu<0}= \log \theta -1+\frac{1}{\theta}\exp\left\{-\frac{\epsilon \Delta \mu}{s}\right\} + \frac{1}{\theta}
\end{equation} 
\begin{remark}
Authors of \cite{r1} also seek for the maximum bias induced by the adversary where the objective function is the minimum relative entropy between the probability distribution of the dataset before ($p_0$) and after the attack ($p_1$). Nevertheless, the choice of the objective function is set as $D(p_1||p_0) \leq \gamma$ for some $\gamma$. 
For the Laplace distribution, KL divergence is not symmetric, hence $D(p_0||p_1) \neq D(p_1||p_0)$. Therefore, due to Stein's lemma \cite{r14}, (\ref{eq:KL_laplace}) and (\ref{eq:KL_laplaceII}) should be used instead.
\end{remark}

In Section \ref{sec:numeric}, we present numerical evaluation results of (\ref{eq:KL_equalscale}) for different values privacy parameter as well as various levels of attack.

\section{Numerical Evaluation \label{sec:numeric}}
\paragraph{ROC curves for hypothesis tests:}
Figure \ref{fig:means_laplace_onesided} presents the numerical evaluation results of the one-sided hypothesis test for the Laplace DP noise parameters. The plots depict three different possible scenarios where the induced bias by the adversary is above, equal to and below the sensitivity of the system. $\mu_0$ is taken equal to 0 hence $\Delta \mu= \mu_1$. As highlighted in Figure \ref{fig:means_laplace_onesided}, we plot the ROC curve under different $\epsilon$ and $\theta$. We observe that when the privacy parameter $\epsilon$ is very small (e.g., $\epsilon=0.015$), the test is no longer accurate and detecting the adversary can be considered similar to random guessing. On the other hand, when the privacy parameter is very large, the accuracy of the test becomes higher in the expense of the privacy guarantee. Furthermore,  as opposed to \cite[Theorem 5]{r3}, we notice that ROC curves strongly depend on the sensitivity $s$, hence the mapping function (query) applied on the input. Indeed, when $\mu_1>s$ the accuracy of the test becomes less important as the adversary is trying to harm the system. Figure \ref{fig:means_laplace_onesided} also shows that the choice of $\theta$ affects the power of the test. When $\theta=1$, the test only consists in choosing between two location parameters.
W.r.t. to the choice of $\theta$, Figure \ref{fig:means_laplace_onesided} shows that the power of the test on the y-axis decreases with $\theta$. For each value of $\epsilon$, ROC curves that correspond to $\theta=1$ outperform those with bigger variance as of a certain level of $\alpha$ and as the privacy is decreased (equivalently $\epsilon$ is increased) this flip in performance occurs for much smaller choices of the probability of false alarm. 

The ROC curves corresponding to two-sided hypothesis test (\ref{eq:alt_two_sided}) are depicted in Figure \ref{fig:means_laplace_twosided} for same values of privacy budget and $\theta$ used in the previous case. As expected, ROC curves for the two-tailed test show the same behavior as in Figure \ref{fig:means_laplace_onesided} w.r.t. the effect of the change in the privacy budget on the accuracy of the test ($\bar{\beta}$ increases with $\epsilon$). On the other hand, we observe that in the second case the test is less accurate. This is justified by the lack of knowledge on the sign of $\Delta \mu$. Indeed, the previous test is considered as being more precise ($\Delta \mu >0$). 

\begin{figure}[ht!]
 \centering
 \begin{minipage}[b]{1\linewidth}
 \includegraphics[width=1.0\linewidth]{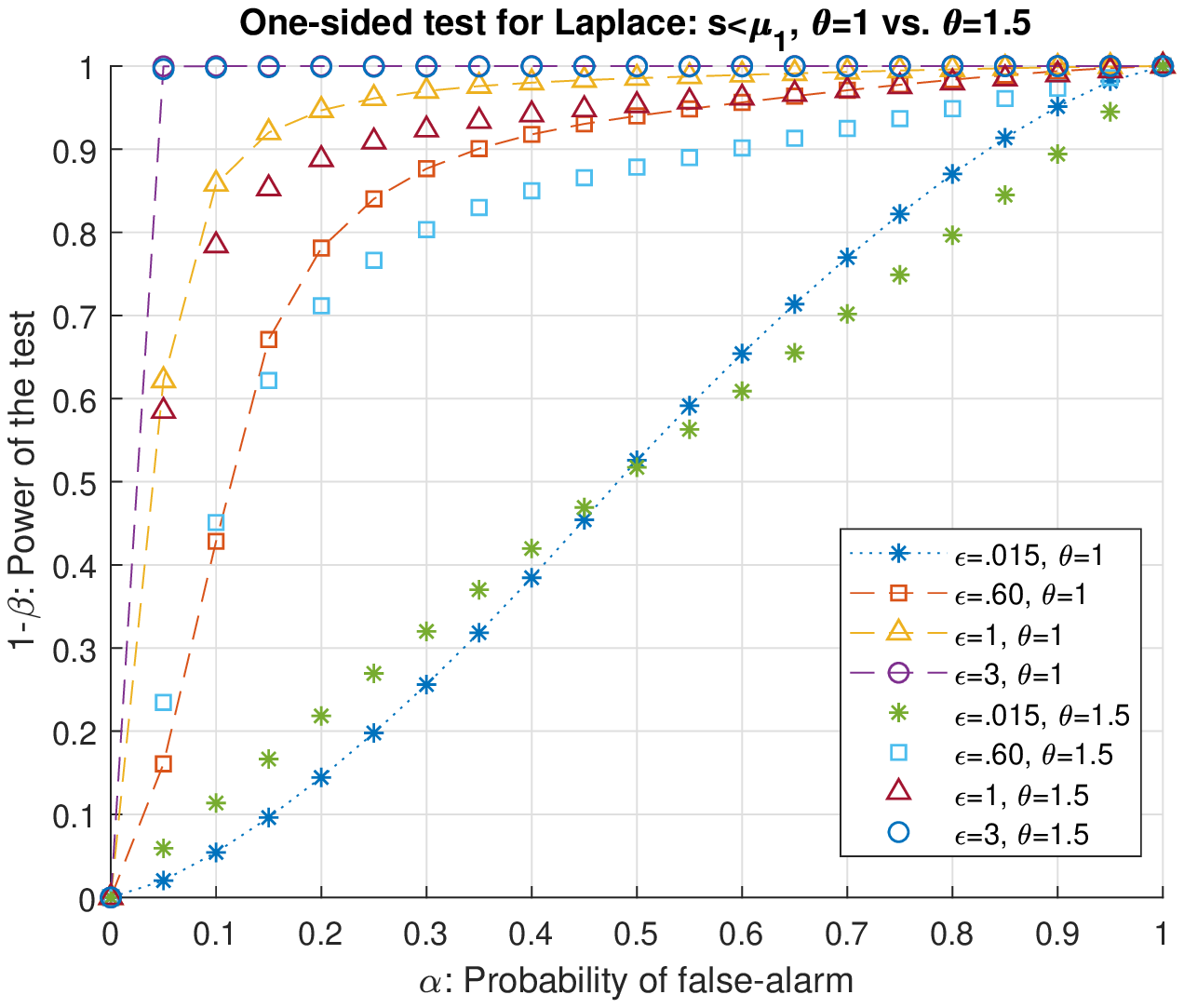}
 \end{minipage}
  \quad
 \begin{minipage}[b]{1\linewidth}
 \centering
 \includegraphics[width=1.0\linewidth]{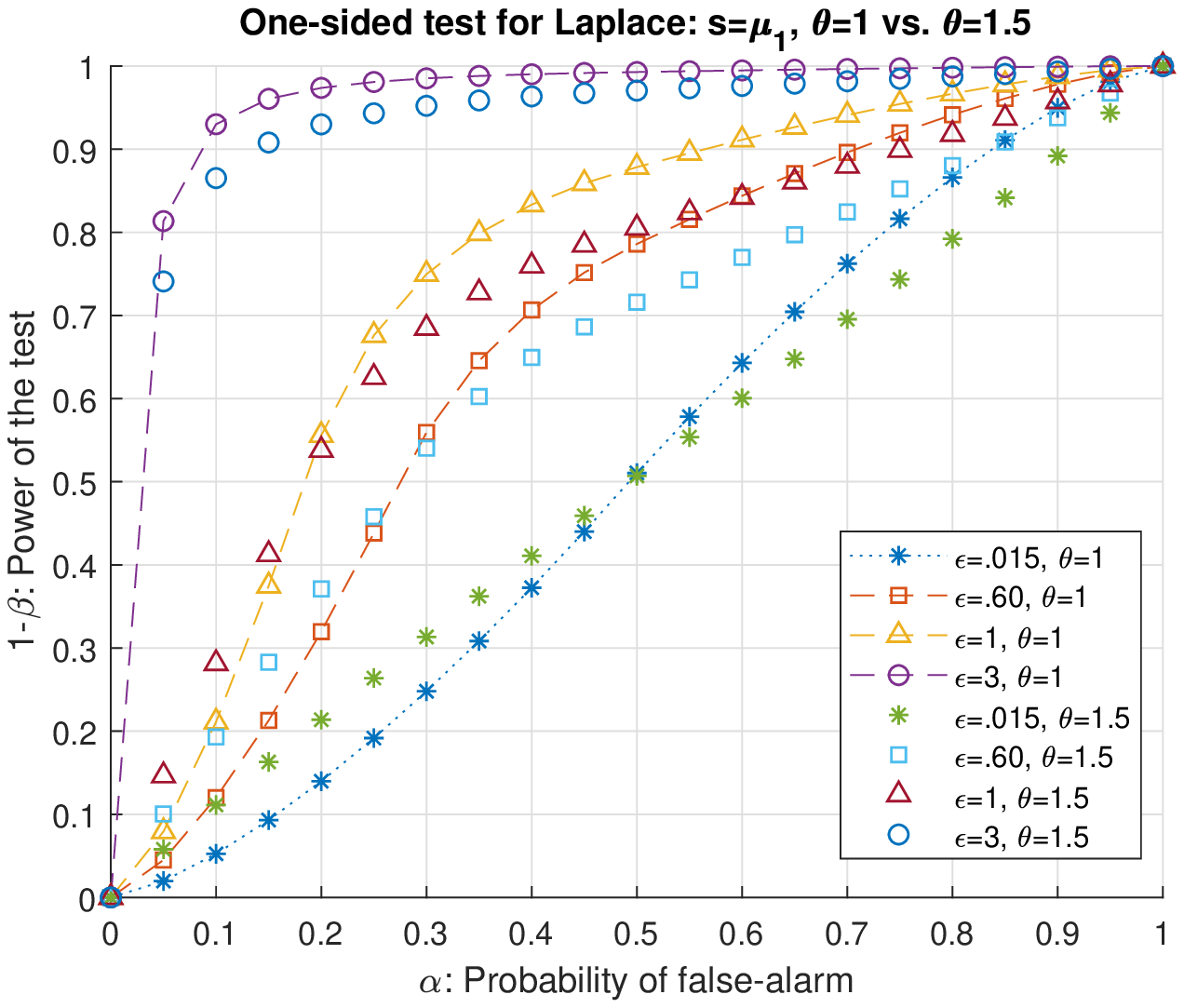}
 \end{minipage}
 \quad
 \begin{minipage}[b]{1\linewidth}
 \centering
 \includegraphics[width=1.0\linewidth]{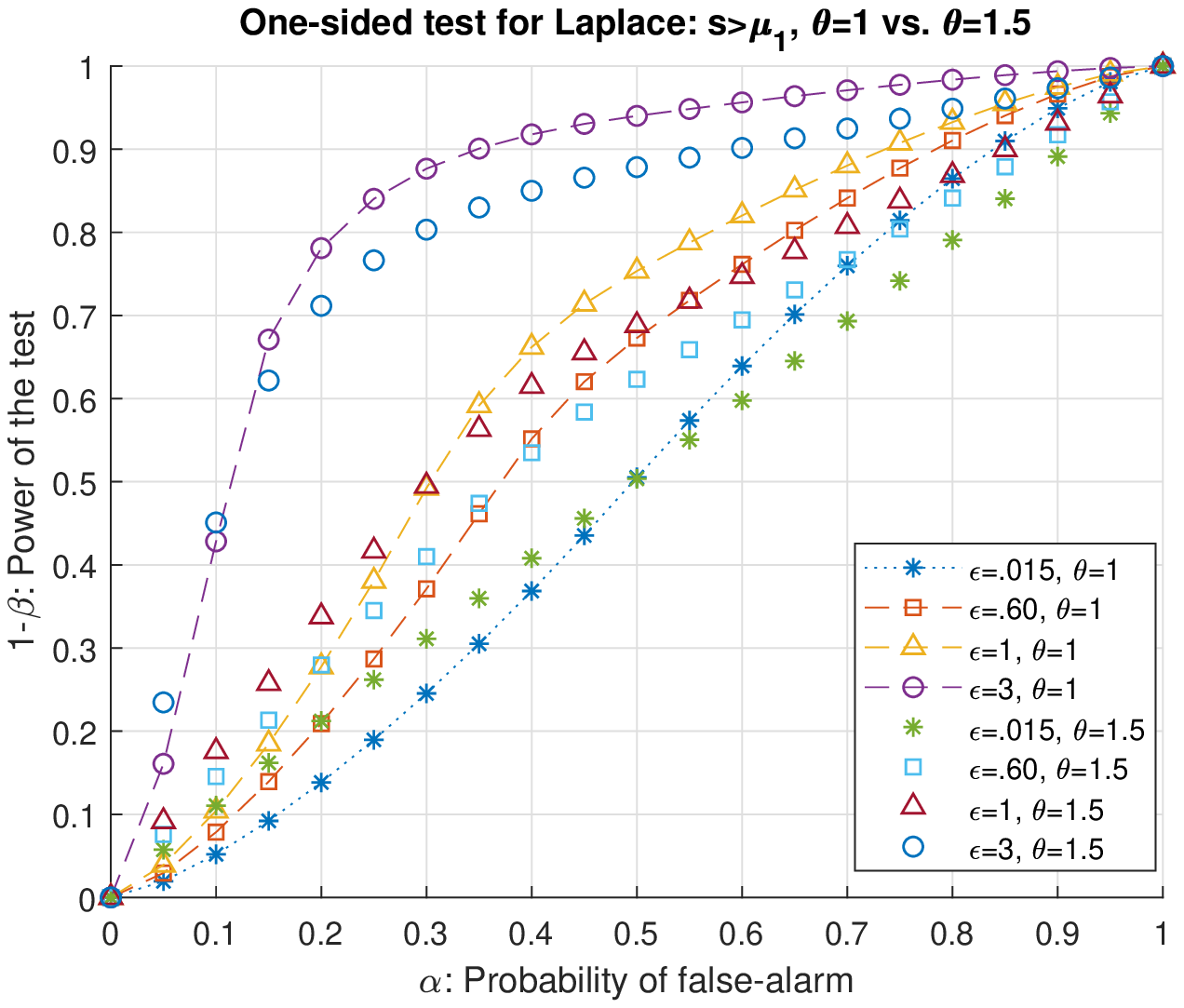}
 \end{minipage}
 \caption{ROC curves for the one-sided hypothesis test ($\Delta \mu= \mu_1 >0$): (\ref{eq:alpha_i}) and (\ref{eq:alpha_ii}) vs. (\ref{eq:beta_i})-(\ref{eq:beta_ii}) for different values of privacy parameter and $\theta=1.5$.} 
\label{fig:means_laplace_onesided}
 \end{figure}
\begin{figure}[ht!]
 \centering
 \begin{minipage}[b]{1\linewidth}
 \includegraphics[width=1.0\linewidth]{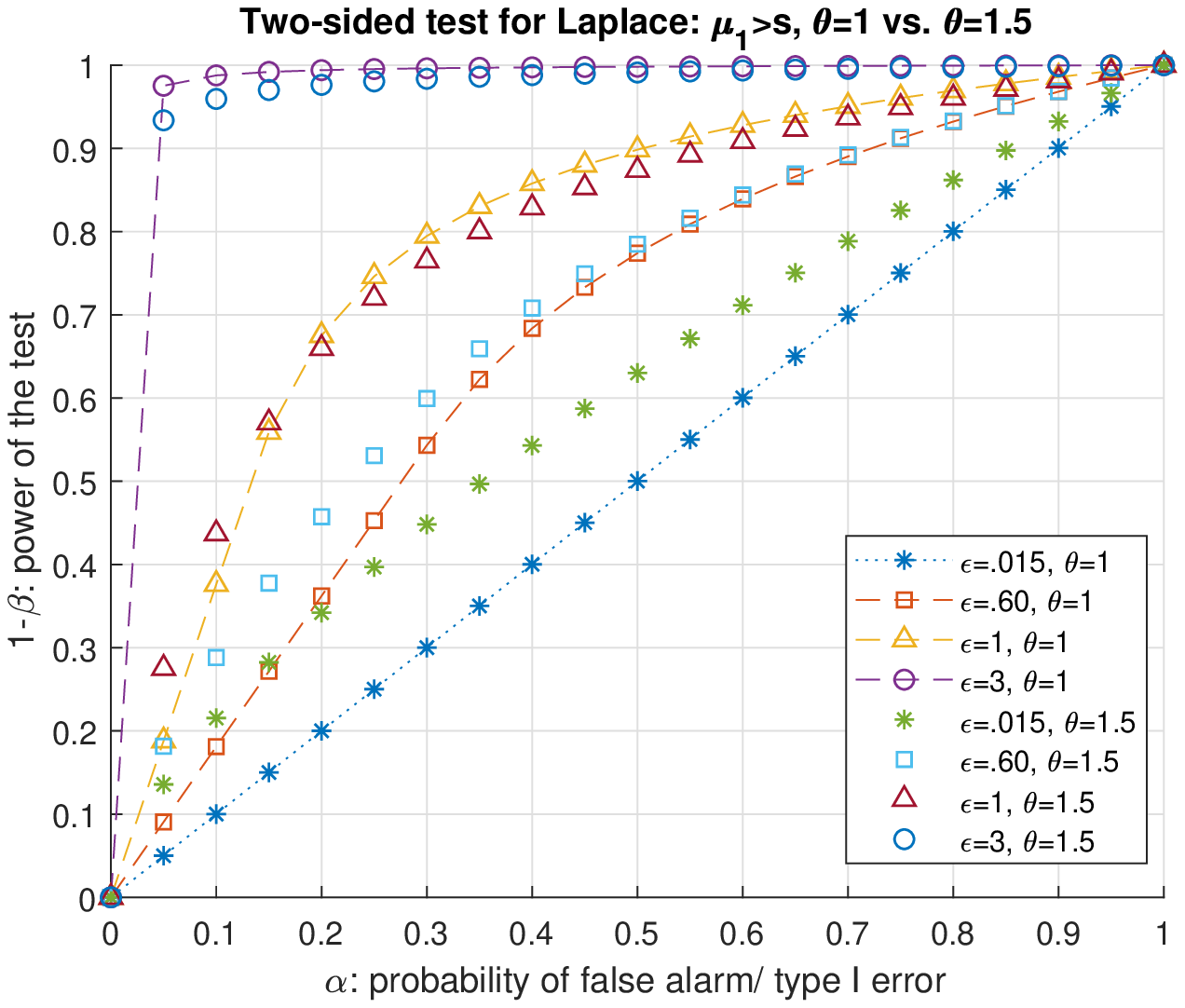}
 \end{minipage}
  \quad
 \begin{minipage}[b]{1\linewidth}
 \centering
 \includegraphics[width=1.0\linewidth]{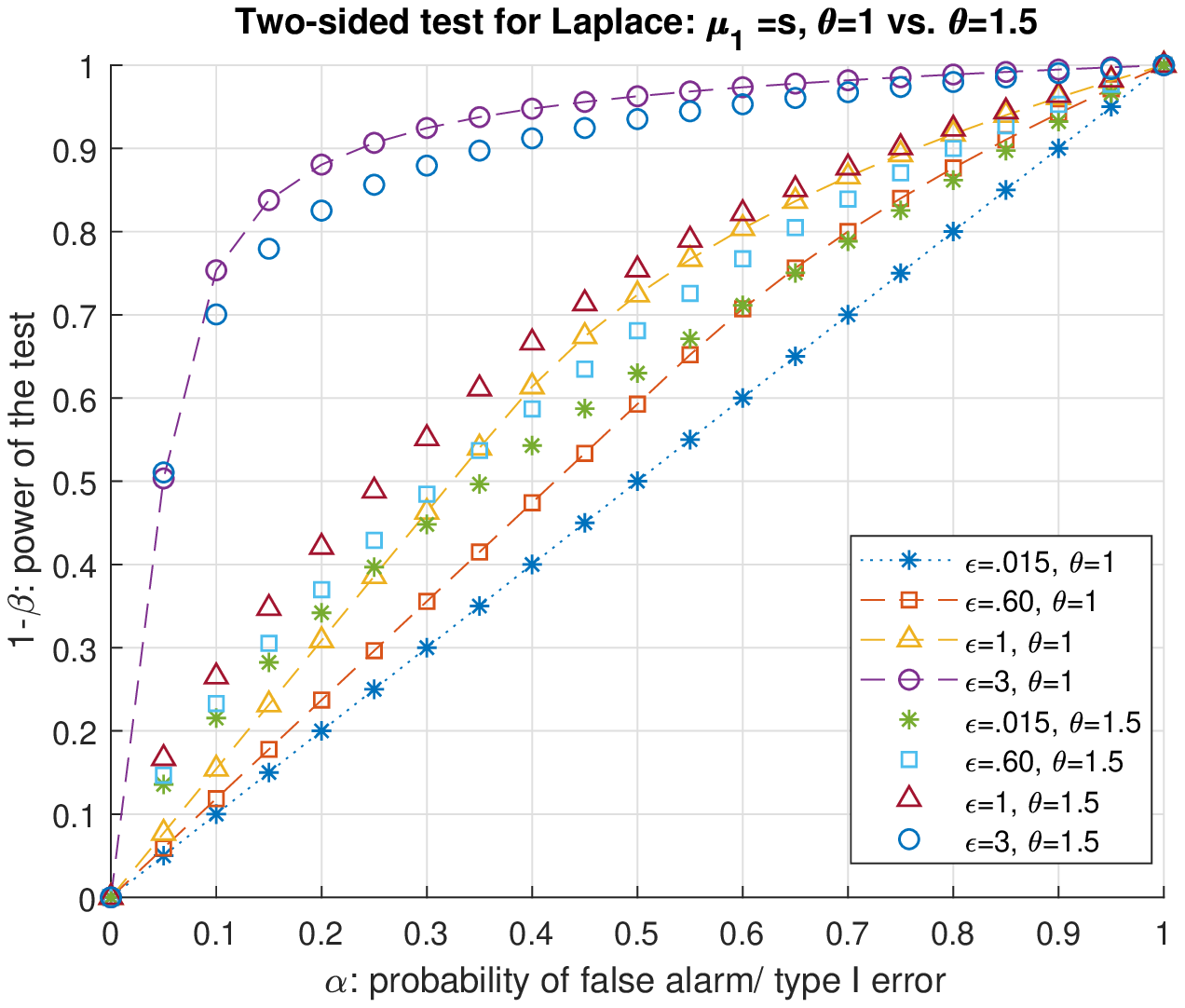}
 \end{minipage}
 \quad
 \begin{minipage}[b]{1\linewidth}
 \centering
 \includegraphics[width=1.0\linewidth]{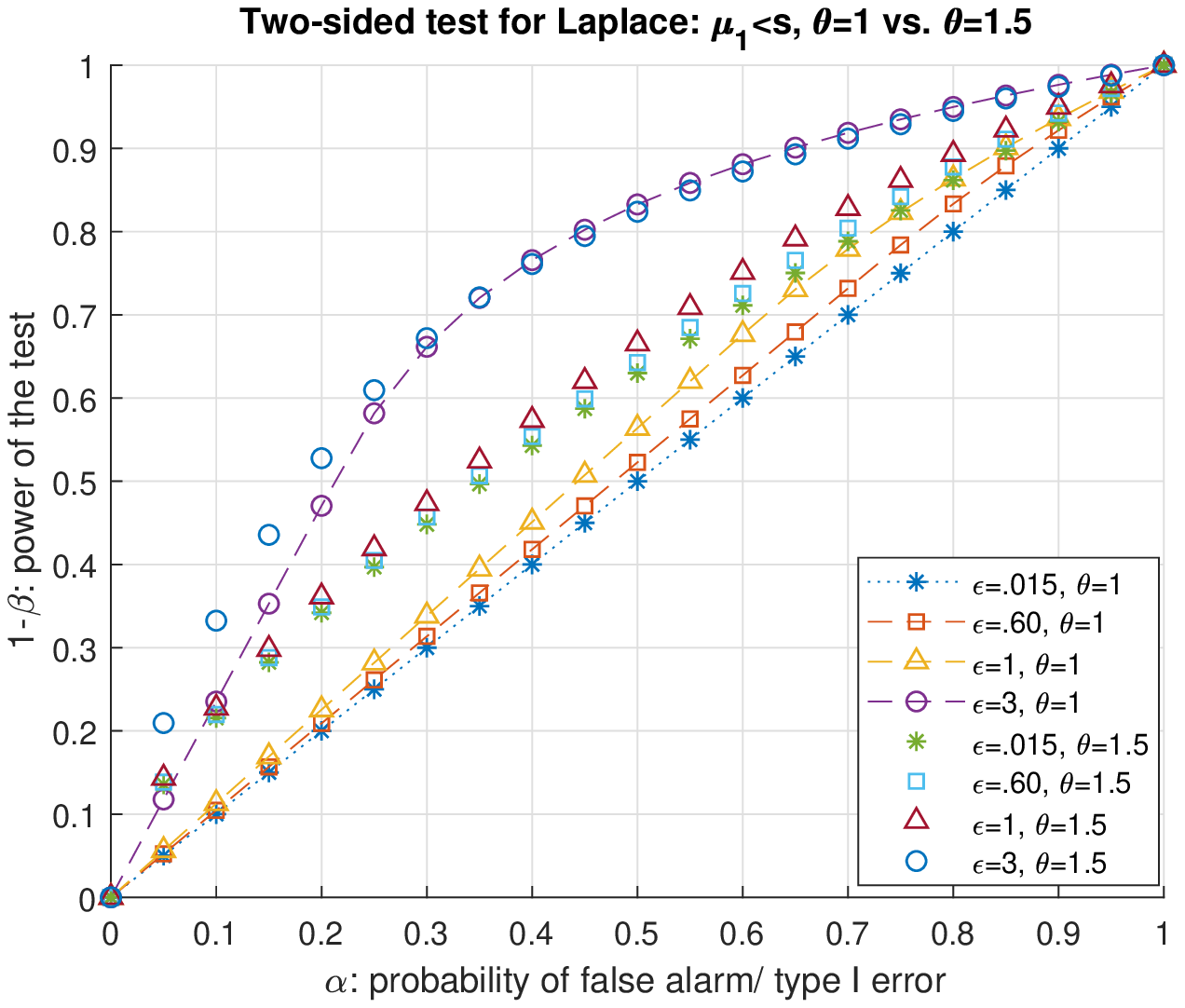}
 \end{minipage}
 \caption{ROC curves for the two-sided hypothesis test ($\Delta \mu= \mu_1 >0$): (\ref{eq:alpha-twosidedI}) and (\ref{eq:alpha-twosidedII}) vs. (\ref{eq:barbeta_two}) depicted for different values of privacy parameter and $\theta=1$ vs $\theta=1.5$.} 
\label{fig:means_laplace_twosided}
 \end{figure}
\paragraph{KL-DP vs. privacy budget:}
KL-DP (\ref{eq:KL_equalscale}) derived in Section \ref{sec:KL} is numerically evaluated for different levels of attack in comparison to the sensitivity of the system for both $\theta=1$ and $\theta=1.5$ in Figure (\ref{fig:KL-lap}). Accordingly, the effect of the attack is compared with the upper bound $\exp\{\epsilon\}$ (\ref{eq:KLdp}). Figure (\ref{fig:KL-lap}) shows that increasing the impact the attack w.r.t. the sensitivity, closes the gap with the upper bound and for the case $|\Delta\mu|= 4*s$ and for moderate privacy budget, KL-DP upper bound is violated. 

\begin{figure}[ht!]
 \centering
 \includegraphics[width=1.0\linewidth]{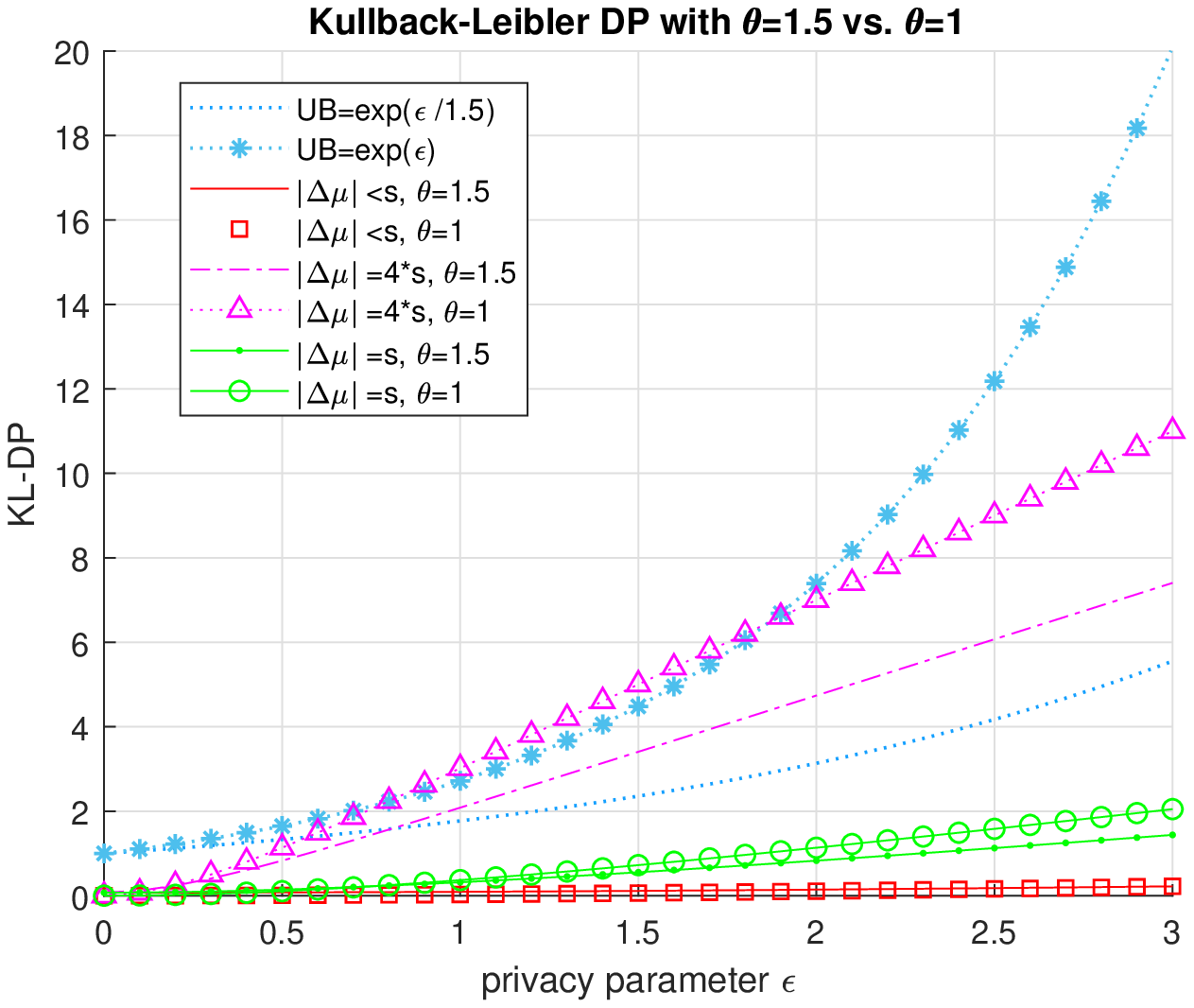}
 \caption{KL-DP for different values of privacy parameter and $\theta=1$ vs $\theta=1.5$.} 
\label{fig:KL-lap}
 \end{figure}

\section{Conclusions and Future Work \label{sec:conc}}
 
We characterized a statistical trade-off between the security of the Laplace mechanism and the privacy protected adversary's advantage for adversarial classification using one and two-tailed hypothesis testing. Particularly, in both settings, we established trade-offs between the sensitivity of the system, privacy parameter and the bias due to the attack using the threshold(s) of the critical region in choosing between the hypotheses whether or not the defender detects the attack. Such trade-offs are presented as functions of corresponding error probabilities. Numerical evaluation results show that increasing the privacy parameter also increases the accuracy of the hypothesis test. Additionally, we derived KL-DP for adversarial classification. According to the numerical evaluation results, the effect of increasing the impact of the attack closes the gap with the DP upper bound $\exp\{\epsilon\}$ and some even violates it for moderate privacy budget. 

The authors wish to continue this line of work on different differentially private mechanisms (e.g. Gaussian, Exponential) as well as different type of query functions (e.g. neural network prediction) to understand its effect on the trade-off through the sensitivity.

\bibliography{paper_adv_dp}
\bibliographystyle{IEEEtran}

\end{document}